\newif\ifproc
\newcommand{\bind}[1]{\ensuremath{#1_{\operatorname{bind}}}}
\let\doendproof\endproof
\renewcommand\endproof{~\hfill$\qed$\doendproof}
\title{Crossing Numbers of Beyond-Planar Graphs%
	\thanks{Research in this work started at the Bertinoro Workshop on Graph Drawing 2019.
		MC was supported by DFG under grant CH 897/2-2.
		FM was supported in part by MIUR under grant 20174LF3T8 AHeAD: efficient Algorithms for HArnessing networked Data.}}
\titlerunning{Crossing Numbers of Beyond-Planar Drawings}
\author{Markus~Chimani\inst{1}\ifproc\orcidID{0000-0002-4681-5550}\fi
\and 
Philipp~Kindermann\inst{2}\ifproc\orcidID{0000-0001-5764-7719}\fi
\and 
Fabrizio~Montecchiani\inst{3}\ifproc\orcidID{0000-0002-0543-8912}\fi
\and 
Pavel~Valtr\inst{4}}
\authorrunning{M. Chimani et al.}
\institute{
Osnabr\"uck University, Osnabrück, Germany\\
\email{markus.chimani@uos.de}
\and
University of W\"urzburg, W\"urzburg, Germany\\
\email{philipp.kindermann@uni-wuerzburg.de}
\and
University of Perugia, Perugia, Italy\\
\email{fabrizio.montecchiani@unipg.it}
\and
Charles University in Prague, Prague, Czech Republic\\
\email{valtr@kam.mff.cuni.cz}
}
\newcommand{\crn}{\ensuremath{\mathrm{cr}}}
\newcommand{\op}{\text{$1$-pl}}
\newcommand{\kp}{\text{$k$-pl}}
\newcommand{\fp}{\text{fan}}
\newcommand{\qp}{\text{quasi}}
\newcommand{\ratio}{crossing ratio\xspace}
\newcommand{\rr}{\ensuremath{\varrho}}
\newcommand{\bigo}{\ensuremath{\mathcal{O}}}
\begin{document}
\maketitle

\pagenumbering{arabic}

\begin{abstract} 
We study the $1$-planar, quasi-planar, and fan-planar crossing number in comparison to the (unrestricted) crossing number of graphs. We prove that there are $n$-vertex $1$-planar (quasi-planar, fan-planar) graphs such that any $1$-planar (quasi-planar, fan-planar) drawing has $\Omega(n)$ crossings, while $\bigo(1)$ crossings suffice in a crossing-minimal drawing without restrictions on local edge crossing patterns.
\end{abstract}

\section{Introduction}
The \emph{crossing number} of a graph $G$, denoted by $\crn(G)$, is the smallest number of pairwise edge crossings over all
possible drawings of $G$. Many papers are devoted to the study of this parameter, refer to~\cite{schaefer,vrto} for surveys. In particular, minimizing the number of crossings is one of the seminal problems in graph drawing (see, e.g.,~\cite{DBLP:conf/er/BatiniFN85,DBLP:journals/tse/BatiniNT86,DBLP:journals/tsmc/SugiyamaTT81}), whose importance has been further witnessed by user studies showing how edge crossings may deteriorate the readability of a diagram~\cite{DBLP:journals/iwc/Purchase00,DBLP:journals/ese/PurchaseCA02,DBLP:journals/ivs/WarePCM02}. On the other hand, determining the crossing number of a graph is NP-hard~\cite{DBLP:journals/dcg/Bienstock91} and can be solved exactly only on small/medium instances~\cite{exactcrmin}. %
 On the positive side, the crossing number is fixed-parameter tractable in the number of crossings~\cite{DBLP:conf/stoc/KawarabayashiR07} and can be approximated by a constant factor for graphs of bounded degree and genus~\cite{DBLP:conf/soda/HlinenyC10}.

A recent research stream studies graph drawings where, rather than minimizing the number of crossings, some edge crossing patters are forbidden; refer to~\cite{DBLP:journals/jgaa/BekosKM18,DBLP:journals/csur/DidimoLM19,DBLP:journals/dagstuhl-reports/Hong0KP16,DBLP:journals/shonan-reports/HongT216} for surveys and reports. A key motivation for the study of so-called \emph{beyond-planar graphs} are recent cognitive experiments showing that already the absence of specific kinds of edge crossing configurations has a positive impact on the human understanding of a graph drawing~\cite{DBLP:journals/vlc/HuangEH14,DBLP:journals/siamjo/Mutzel01}. Of particular interest for us are three families of beyond-planar graphs that have been extensively studied, namely the $k$-planar, fan-planar, and $k$-quasi-planar graphs; refer to~\cite{DBLP:journals/csur/DidimoLM19} for additional families. A \emph{$k$-planar drawing} is such that each edge is crossed at most $k \ge 1$ times~\cite{DBLP:journals/combinatorica/PachT97} (see also~\cite{DBLP:journals/csr/KobourovLM17} for a survey on $1$-planarity).  A \emph{$k$-quasi planar} drawing does not have $k \ge 3$ mutually crossing edges~\cite{DBLP:journals/dcg/AlonE89}. A \emph{fan-planar drawing} does not contain two independent edges that cross a third one or two adjacent edges that cross another edge from different ``sides''~\cite{DBLP:journals/corr/KaufmannU14}. A graph is \emph{$k$-planar} (\emph{$k$-quasi-planar}, \emph{fan-planar}) if it admits a $k$-planar ($k$-quasi-planar, fan-planar) drawing; a $3$-quasi-planar graph is simply called \emph{quasi-planar}.

\begin{table}[t]
  \centering
  \caption{Lower and upper bounds the crossing ratio of beyond-planar graphs.}
  \label{tab:contribution}
  \medskip
  \begin{tabular}{l@{\qquad}l@{\qquad}l}
    \toprule
    Graph class & lower bound & upper bound \\
    \hline
    1-planar & $n/2-1$ & $n/2-1$ \\
    quasi-planar & $\Omega(n)$ & $O(n^2)$ \\
    $k$-quasi-planar & $\Omega(n/k^3)$ & $f(k)\cdot n^2\log^2 n$ \\
    fan-planar & $\Omega(n)$ & $O(n^2)$\\
    \bottomrule
  \end{tabular}
\end{table}

In this context, an intriguing question is to what extent edge crossings can be minimized while forbidding such local crossing patterns. In particular, we ask whether avoiding local crossing patterns in a drawing of a graph may enforce an overall large number of crossings, whereas only a few crossings would suffice in a crossing-minimal drawing of the graph. %
We answer this question in the affirmative for the above-mentioned three families of beyond-planar graphs. Our contribution are summarized in Table~\ref{tab:contribution}.

\begin{enumerate}[nosep]
\item In~\cref{sse:1pl}, we prove that there exist $n$-vertex $1$-planar graphs such that the ratio between the minimum number of crossings in a $1$-planar drawing of one such graph and its crossing number is $n/2-1$. This result can be easily extended to $k$-planar graphs if we allow parallel edges.

\item In~\cref{sse:qp}, we prove that there exist $n$-vertex quasi-planar graphs such that the ratio between the minimum number of crossings in a quasi-planar drawing of one such graph and its crossing number is $\Omega(n)$. Similarly, a $\Omega(n/k^3)$ bound can be proved for $k$-quasi-planar graphs.

\item In~\cref{sse:fp}, we prove that there exist $n$-vertex fan-planar graphs such that the ratio between the minimum number of crossings in a fan-planar drawing of one such graph and its crossing number is $\Omega(n)$.
\end{enumerate}

\noindent The lower bound in Result 1 is tight. Since fan-planar and quasi-planar graphs have $\bigo(n)$ edges, the lower bounds in Results 2 and 3 are a linear factor from the trivial upper bound $\bigo(n^2)$, and it remains open whether such an upper bound can be achieved (see~\cref{se:open}). All results are based on nontrivial constructions that exhibit interesting structural properties of the investigated graphs. 

\paragraph{Notation and Definitions.}\label{sse:def}

We assume familiarity with standard definitions about graph drawings and embeddings of planar and nonplanar graphs (see, e.g.,~\cite{dett-gdavg-99,DBLP:journals/csur/DidimoLM19}). In a drawing of a graph, we assume that an edge does not contain a vertex other than its endpoints, no two edges meet tangentially, and no three edges share a crossing. It suffices to only consider {\em simple} drawings  where any two edges intersect in at most one point, which is either a common endpoint or an interior point where the two edges properly cross. Thus, in a simple drawing, any two adjacent edges do not cross and any two non-adjacent edges cross at most once.

We define the \emph{$k$-planar crossing number} of a $k$-planar graph $G$, denoted by $\crn_\kp(G)$, as the 
minimum number of crossings over all $k$-planar drawings of $G$. 
The \emph{$k$-planar \ratio}~$\rr_\kp$ is the supremum of $\crn_\kp(G)/\crn(G)$ 
over all $k$-planar graphs $G$. Analogously, we define 
the \emph{quasi-planar} and the \emph{fan-planar crossing number} 
of a graph $G$, denoted by $\crn_\qp(G)$ and $\crn_\fp(G)$, as well as the 
\emph{quasi-planar}  and the  \emph{fan-planar \ratio}, denoted by 
$\rr_\qp$ and~$\rr_\fp$.

\section{The $1$-planar \ratio}\label{sse:1pl}

An $n$-vertex $1$-planar graph has at most $4n-8$ edges and a $1$-planar drawing has at most $n-2$ crossings, that is $\crn_\op(G) \le n-2$~\cite{DBLP:journals/csr/KobourovLM17}. Observe that for $\crn(G)  < \crn_\op(G)$ it has to hold that $\crn(G) \ge 2$. It follows that the $1$-planar \ratio is $\rr_\op \le  n/2-1$. We show that this bound can be achieved.

\begin{theorem}\label{thm:1planar}
For every $\ell \ge 7$, there exists a $1$-planar graph $G_\ell$ with $n=11\ell+2$ vertices such that $\crn_\op(G_\ell)=n-2$ and $\crn(G_\ell)=2$, which yields the largest possible $1$-planar \ratio.
\end{theorem}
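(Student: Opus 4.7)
The plan is to build $G_\ell$ from $\ell$ disjoint copies of a fixed $11$-vertex gadget $H$ together with two global apex vertices $u,v$ attached to each copy by a uniform pattern of edges, so that the total vertex count is $11\ell+2$. The gadget $H$ will be chosen as a dense $1$-planar graph such that every $1$-planar drawing of $G_\ell$ spends at least $11$ crossings inside the sector of the plane occupied by each copy of $H$, but such that $H$ still admits a near-planar drawing if one is allowed to route a single distinguished edge across many others at once. The condition $\ell \ge 7$ should emerge from fitting both apex vertices plus the $\ell$ gadget copies together without creating extra global crossings.

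For the upper bound $\crn(G_\ell)\le 2$, I would exhibit an explicit drawing in which each copy of $H$ is drawn planarly by letting its distinguished edge cross many of its own edges, and then $u$ and $v$ are attached so that only two global crossings remain, each involving an apex edge (either with a gadget edge or with the other apex edge). The matching lower bound $\crn(G_\ell)\ge 2$ follows either from exhibiting two edge-disjoint Kuratowski subgraphs or from a direct Euler-type count showing that a single crossing is not enough to realize $G_\ell$. On the $1$-planar side, the upper bound $\crn_\op(G_\ell) \le n-2$ is the general bound for $1$-planar graphs recalled in the text.

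The core of the proof is the $1$-planar lower bound $\crn_\op(G_\ell) \ge n-2 = 11\ell$. The standard edge-counting inequality $c \ge m-(3n-6)$ is of no help here, since $\crn(G_\ell)=2$ already forces $m\le 3n-4$, so the argument must be genuinely topological. I would proceed in two steps: first, a \emph{separation lemma} asserting that in any $1$-planar drawing of $G_\ell$ the cyclic order of the gadgets around $u$ (and around $v$) partitions the plane into $\ell$ topologically disjoint sectors, one per copy of $H$; second, a \emph{gadget lemma} asserting that inside each sector the restricted drawing of $H$ together with its apex-edge fragments must use at least $11$ crossings. Summing over the $\ell$ sectors yields the desired $11\ell$ crossings.

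The main obstacle is the gadget lemma. It is a finite, structural question about $1$-planar drawings of a specific $11$-vertex subgraph, but the case analysis is non-trivial because the apex edges can thread through $H$ in several topologically inequivalent ways, and one must show that none of these threadings saves a crossing. I expect this to be the most delicate part of the proof; once it is in hand, the separation lemma and the global summation follow by straightforward topological arguments, and together with the matching general upper bound $\crn_\op(G_\ell)\le n-2$ the theorem follows.
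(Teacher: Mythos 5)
Your architecture---$\ell$ disjoint constant-size gadgets attached only to two global apex vertices $u,v$---cannot yield the theorem, so the ``gadget lemma'' you defer as the delicate part is not merely hard, it is false for every graph of this shape. The reason is the interplay with the requirement $\crn(G_\ell)=2$. In a drawing with only two crossings at most four edges are crossed, hence all but at most four of the gadget-pieces (a copy of $H$ together with its apex edges) are drawn completely crossing-free there; since the remaining graph is connected, attaches to both $u$ and $v$, and does not cross such a piece, each of these $\ge \ell-4$ pieces is a planar graph admitting an embedding with $u$ and $v$ on a common face. Now take any $1$-planar drawing of $G_\ell$ (one exists by assumption) and let $B$ be the union of the at most four exceptional pieces. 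Under $1$-planarity each edge of $B$ is crossed at most once, so at most $|E(B)|=\bigo(1)$ of the good pieces can cross $B$; the remaining $\ell-\bigo(1)$ good pieces, being connected and attached to both apexes, must each sit in a face of (the planarization of) $B$'s drawing having both $u$ and $v$ on its boundary, so such a face exists. But then you can redraw \emph{all} good pieces nested inside that face, each planar in a thin lens between $u$ and $v$, pairwise crossing-free and not crossing $B$. This produces a $1$-planar drawing of $G_\ell$ with $\bigo(1)$ crossings (only $B$'s self-crossings survive), so $\crn_\op(G_\ell)=\bigo(1)$, contradicting the target $n-2=11\ell$. In short: crossings forced per sector cannot be certified locally, and a constant-size ``bad'' part attached at two shared vertices can never force linearly many crossings on an otherwise planar, book-embeddable remainder under the $1$-planarity constraint. (Your sketch of the $2$-crossing drawing is also internally inconsistent: if each copy's distinguished edge crosses many of its own edges, those crossings count toward $\crn(G_\ell)$, which would then be $\Omega(\ell)$, not $2$.)

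The paper's construction avoids exactly this trap by making the forcing global rather than modular: it takes a single rigid graph $P$ (the Korzhik--Mohar graph $H_\ell$, whose only $1$-planar embedding is its planar one), its dual $P^*$, binding edges that pin $f^*$ into the polar face $f$, and one special edge $(x,y^*)$. A lemma shows $G'$ has only two $1$-planar embedding types: either $P^*$ lies entirely inside $f$ (in which case the special edge needs two crossings on a single edge, violating $1$-planarity but giving the cheap drawing with $\crn(G_\ell)=2$), or $P^*$ is interleaved with $P$ so that every edge of $P$ crosses an edge of $P^*$, forcing exactly $n-2$ crossings. If you want to salvage a proof, you need this kind of globally rigid primal--dual structure in which almost every edge of the graph is crossed in every $1$-planar drawing; no decomposition into $\bigo(1)$-size pieces glued at two vertices can do that.
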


The construction of~$G_\ell$ consists of three parts: a rigid graph~$P$ that has to be drawn planar in any 1-planar drawing; its dual~$P^*$; a set of \emph{binding} edges and one \emph{special} edge that force~$P$ and~$P^*$ to be intertwined in any 1-planar drawing.

To obtain~$P$, we utilize a construction introduced by Korzhik and Mohar~\cite{km-mo1ih-jgt13}. 
They construct graphs~$H_\ell$ that are the medial extension of the Cartesian product of the path of length~2 and the cycle of length~$\ell$; see \cref{fi:1-planar-P}. 
They prove that~$H_\ell$ has exactly one 1-planar embedding on the sphere, and that embedding is crossing-free. 
We choose $P=H_{\ell}$ as our rigid graph and fix its (1-)planar embedding (when we will refer to~$P$, we will usually mean this embedding). 

Let~$P^*$ be the dual of~$P$, obtained by placing a dual vertex $h^*$ into each face~$h$ of~$P$ and connecting two dual vertices if their corresponding faces share an edge; see \cref{fi:1-planar-P-dual}. 
Since~$P$ has~$5\ell$ vertices and $11\ell$ edges, by Euler's polyhedra formula it has $6\ell+2$ faces; thus,~$P^*$ has $6\ell+2$ vertices and~$11\ell$ edges.

Obviously, $P\cup P^*$ can be drawn planar, as both~$P$ and~$P^*$ are planar and disjoint. 
All faces of~$P$ have size~3 or~4, except two large (called \emph{polar}) faces~$f$ and~$g$ of size~$\ell$.
We create a graph~$G'$ by adding~$\ell$ \emph{binding} edges to~$P\cup P^*$ between~$f^*$ (the vertex of~$P^*$ corresponding to face~$f$) and the vertices of~$P$ that are incident to~$f$.
This forces~$f^*$ to be drawn in face~$f$ in any 1-planar drawing.
\ifproc
In the full version~\cite{fullversion}
\else
In~\cref{app:1pl},
\fi 
we prove the 
following lemma, cf.\ \cref{fi:1-planar-1-planar} and \cref{fi:1-planar-2-crossings}.

\begin{figure}[t]
\begin{subfigure}[t]{.47\columnwidth}
  \includegraphics[page=21]{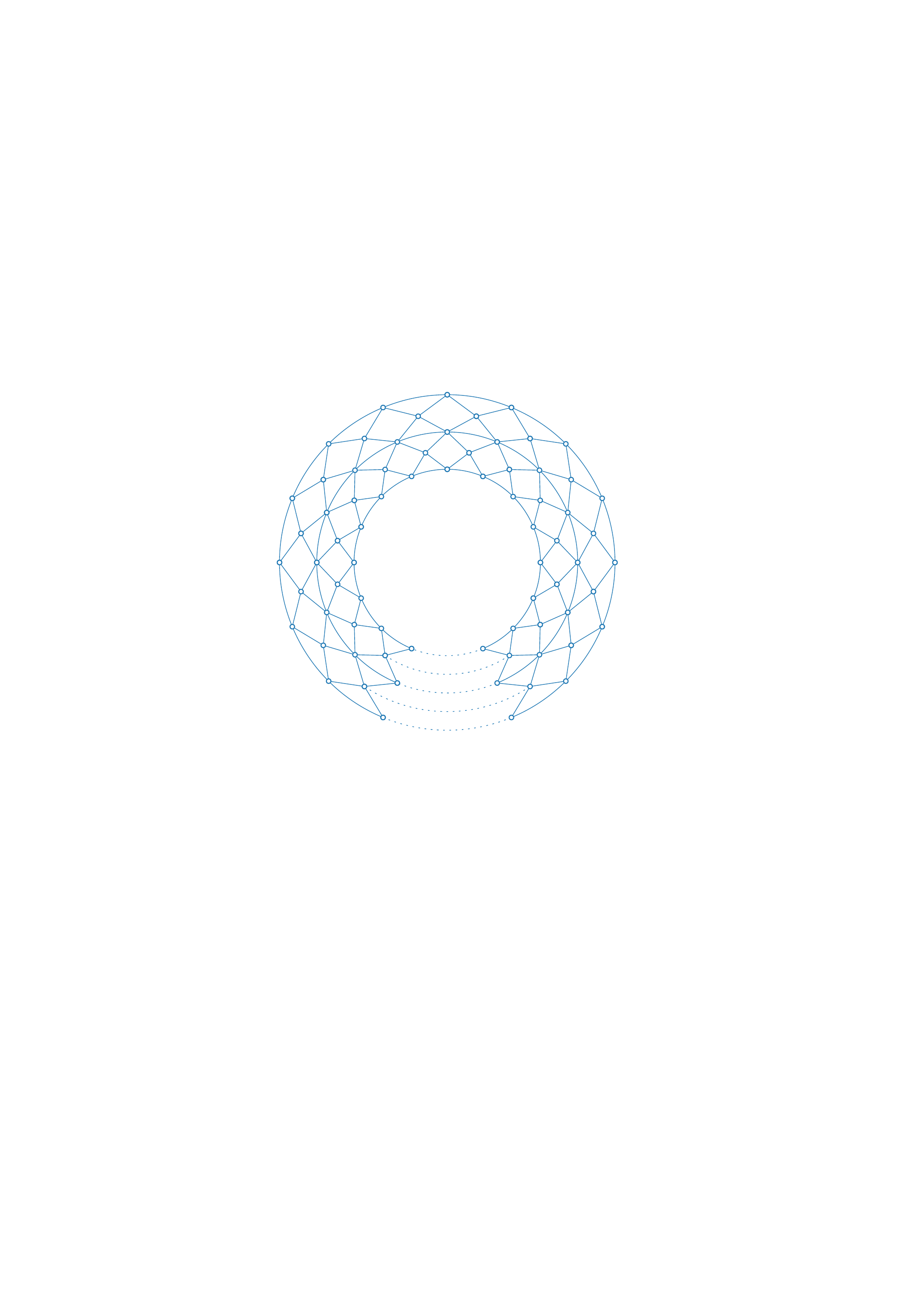}
  \caption{The graph $P$}
  \label{fi:1-planar-P}
\end{subfigure}
\hfill
\begin{subfigure}[t]{.47\columnwidth}
  \includegraphics[page=24]{1planar}
  \caption{The graph $P\cup P^*$}
  \label{fi:1-planar-P-dual}
\end{subfigure}

\begin{subfigure}[t]{.47\columnwidth}
  \includegraphics[page=26]{1planar}
  \caption{$\crn_\op(G_\ell)\le n-2$}
  \label{fi:1-planar-1-planar}
\end{subfigure}
\hfill
\begin{subfigure}[t]{.47\columnwidth}
  \includegraphics[page=28]{1planar}
  \caption{$\crn(G_\ell)\le 2$}
  \label{fi:1-planar-2-crossings}
\end{subfigure}
\caption{Construction of the graph~$G_\ell$ in the proof of \cref{thm:1planar}. Blue circles and edges are~$P$; red squares and bold edges are~$P^*$;
green dashed edges are the binding edges; and the orange very bold edge is the special edge.}
\end{figure}

\newcommand{\lemOPlanarTDrawings}{%
  $G'$ has only two types of 1-planar embeddings (up to the choice of the outer face): 
  a planar one where~$P^*$ lies completely inside face~$f$ of~$P$; and
  a 1-planar embedding where~$f^*$ lies inside~$f$, $g^*$ lies inside~$g$, and each
  edge of~$P$ crosses an edge of~$P^*$ and vice versa.}
\wormhole{1-planar-2-drawings}
\begin{lemma}\label{lem:1-planar-2-drawings}
  \lemOPlanarTDrawings
\end{lemma}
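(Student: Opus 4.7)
The plan is to prove the lemma in three steps: fix the embedding of $P$, pin down the location of $f^*$, and then trace out the rest of $P^*$.

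For the first step, I would appeal to the theorem of Korzhik and Mohar, which guarantees that $P = H_\ell$ admits a unique $1$-planar embedding on the sphere and that this embedding is crossing-free. Since the restriction of any $1$-planar drawing of $G'$ to the subgraph $P$ is itself a $1$-planar drawing of $P$, the combinatorial embedding of $P$ inside $G'$ is determined up to the choice of the outer face. In particular, the $6\ell+2$ faces of $P$ are fixed, and from this point on I may treat the planar skeleton of $P$ as given.

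For the second step, I would show that $f^*$ must lie in face $f$. Suppose for contradiction that $f^*$ lies in some face $h \ne f$. Because $f$ and $g$ are the only two faces of size $\ell$ and are vertex-disjoint, and all other faces of $H_\ell$ have size $3$ or $4$, the boundary of $h$ shares at most four vertices with $\partial f$. Hence at least $\ell - 4$ of the binding edges from $f^*$ must cross an edge of $P$ in order to reach their endpoints on $\partial f$. In a simple drawing, two binding edges cannot cross since they share the endpoint $f^*$, so each binding edge that enters $f$ from outside must do so through a distinct edge of $\partial f$ (each such edge being crossable at most once by $1$-planarity). Combining this with the cyclic order of binding edges around $f^*$, which is inherited from the cyclic order on $\partial f$, and with the fact that the dual edges of $P^*$ incident to $f^*$ also need to be routed in a $1$-planar fashion, I expect to reach a contradiction: either two binding edges must cross, or an edge of $P$ is crossed twice, or the dual edges incident to $f^*$ have no legal route. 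This step is the main technical obstacle.

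For the third step, with $f^* \in f$, I would locate the remaining dual vertices. Let $R$ denote the region bounded by $\partial f$ that contains $f^*$. If every vertex of $P^*$ lies in $R$, then the whole $P^*$ can be drawn planarly inside $f$, yielding the first type of embedding. Otherwise, pick a dual vertex outside $R$; by connectivity of $P^*$ there is a dual path from $f^*$ to it, and the first edge of that path leaving $R$ crosses some primal edge $e$ of $\partial f$. Since $e$ borders exactly two primal faces (say $f$ and $h_1$) and, by $1$-planarity, can be crossed at most once, the only consistent placement is $h_1^* \in h_1$ with the dual edge $f^*h_1^*$ crossing $e$. Iterating this argument along $P^*$ — using at each step that every primal edge borders exactly two faces and admits at most one crossing — forces each dual vertex $h^*$ to land inside the corresponding primal face $h$ and each dual edge to cross exactly the primal edge it is dual to. In particular, $g^*$ ends up inside $g$, giving the second type of embedding.

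The main difficulty will be the second step: the ``trapping'' of $f^*$ inside $f$ is intuitively clear from the wheel-like structure of the binding edges, but a rigorous proof must exploit the specific combinatorics of $H_\ell$ (face sizes and the vertex-disjointness of $\partial f$ and $\partial g$), since $1$-planarity alone leaves enough crossing budget along $\partial f$ to naively accommodate $\ell$ binding edges entering from outside. Once this is established, the third step follows from the duality between $P$ and $P^*$ combined with the $1$-planarity restriction.
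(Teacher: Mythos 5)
Your step~2 (pinning $f^*$ inside $f$) is essentially the paper's first step and is in fact easier than you fear: since $1$-planarity lets each binding edge participate in at most one crossing, placing $f^*$ in the polar face $g$ (vertex-disjoint from $f$) already forces every binding edge to cross at least twice, and placing it in a face of size at most $4$ forces more than four binding edges to leave that face through at most four boundary edges, each crossable only once. The genuine gap is in your step~3. Your iteration argument assumes that the first dual edge leaving the region $R$ must be the dual edge $f^*h_1^*$ of the primal edge $e$ it crosses, and that this forces $h_1^*$ into $h_1$; neither claim follows. The edge leaving $R$ can be any edge of $P^*$ incident to a dual vertex drawn inside $f$ (not necessarily incident to $f^*$, and not necessarily the dual of $e$), and nothing in this local picture excludes ``mixed'' configurations in which part of $P^*$ (possibly including $g^*$) stays inside $f$ while a few dual edges stray across $\partial f$ into nearby small faces. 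In effect you are assuming the one-dual-vertex-per-face conclusion that the lemma asks you to prove.

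The paper closes exactly this hole with two ingredients your sketch lacks. First, the dichotomy is driven by the location of $g^*$: there are $\ell$ internally disjoint $f^*$--$g^*$ paths in $P^*$, so $g^*$ cannot lie in any face with fewer than $\ell$ boundary edges (some boundary edge would be crossed twice); hence $g^*$ lies in $f$ or in $g$. If $g^*$ lies in $g$, a global counting argument (every edge of $P^*$ lies on a short $f^*$--$g^*$ path and must be crossed; since $P$ and $P^*$ have equally many edges, every edge of $P$ is crossed too, which forces exactly one dual vertex per face and the correct rotation) yields the intertwined embedding. If $g^*$ lies in $f$, one must still exclude \emph{any} crossing between $P$ and $P^*$, and this is the technically heavy part of the paper's proof --- a three-case analysis exploiting degrees and short disjoint paths in $P^*$ --- which your proposal does not address at all. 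Without an argument for this last case, the claimed dichotomy is not established.
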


Let~$z$ be a vertex of~$P$ on the boundary of~$f$. Let~$y$ be the face of size~4
that has~$z$ on its boundary. Let~$x$ be the degree-6 vertex on the boundary of~$y$.
We obtain~$G_\ell$ from~$G'$ by adding the \emph{special} edge $(x,y^*)$.
In the planar embedding of \cref{lem:1-planar-2-drawings},
$P^*$ and thus $y^*$ lies inside 
face~$f$ of~$P$, so $(x,y^*)$ has to cross at least two edges of~$P$; 
see \cref{fi:1-planar-2-crossings}.
Choosing the face that corresponds to~$z$ as the outer face of $P^*$
gives a non-1-planar drawing of~$G_\ell$ with~2 crossings.

Hence,~$G'$ has to be drawn in the second way of \cref{lem:1-planar-2-drawings};
see \cref{fi:1-planar-1-planar}. 
Here, the edge $(x,y^*)$ can be added without further crossings. 
Graph~$G_\ell$ consists of %
$n=11\ell+2$ vertices in total. 
Both $P$ and~$P^*$ have $11\ell$ edges, and each of them is crossed, so there are $n-2$ crossings in total, which is the maximum possible in a 1-planar drawing.
Hence, $\crn_\op(G_\ell)=n-2$ and $\crn(G_\ell)=2$, so $\rr_\op \le  n/2-1$.

\smallskip

The construction used in the proof of \cref{thm:1planar} can be generalized to $k$-planar multigraphs. It suffices to replace each edge of $G_\ell$, except the special edge, by a bundle of $k$ parallel edges: 

\begin{corollary}\label{co:kplanar}
For every $\ell \ge 6$, there exists a $k$-planar multigraph $G_{\ell,k}$ with $n=11\ell+2$ vertices and maximum edge multiplicity $k$ such that $\crn_\kp(G_{\ell,k})=k^2\,(n-2)$ and $\crn(G_{\ell,k})=2k$, thus $\rr_\kp \ge k\,(n-2)/2$.
\end{corollary}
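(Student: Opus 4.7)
The plan is to transfer the construction of \cref{thm:1planar} to the multigraph setting by ``bundling'' every edge except the special one into $k$ parallel copies routed through a thin tubular neighborhood of the original drawing. The upper bound $\crn_\kp(G_{\ell,k})\le k^2(n-2)$ follows by starting from the $1$-planar drawing of $G_\ell$ in \cref{fi:1-planar-1-planar} and replacing each non-special edge by $k$ parallel copies drawn in an infinitesimally thin strip around it: every crossing between two original edges becomes a $k\times k$ grid of $k^2$ crossings, and each bundle edge is crossed at most $k$ times, so $k$-planarity is preserved. Likewise, starting from the $2$-crossing drawing of \cref{fi:1-planar-2-crossings} and applying the same substitution gives a drawing with $2k$ crossings, since only the special edge participates in crossings and it meets two bundles of $k$ edges, yielding $\crn(G_{\ell,k})\le 2k$.

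For the matching lower bound on $\crn_\kp$, I would extend the structural dichotomy of \cref{lem:1-planar-2-drawings} from $G'$ to its bundled analog $G'_k := G_{\ell,k}\setminus\{(x,y^*)\}$: any $k$-planar drawing of $G'_k$ is either of the planar type, with the bundled dual $P^*$ drawn inside face $f$ of the bundled $P$, or of the intertwined type, with each primal bundle crossed by its dual bundle. The topological core of the argument is that a bundle of $k$ parallel edges partitions its local neighborhood into $k+1$ regions (two outer sides and $k-1$ bigons), so any curve joining the two outer sides must intersect each of the $k$ parallel edges at least once. In the planar type, the special edge would therefore cross at least $2k$ edges (crossing the two bundles corresponding to the two edges it crossed in $G_\ell$), violating the $k$-planar bound of $k$ on any single edge; hence the drawing must be intertwined. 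In the intertwined type, the same region argument applied to $u^*$ and $v^*$, which lie on opposite outer sides of each corresponding primal bundle, forces each pair of dual-and-primal bundles to contribute at least $k\cdot k = k^2$ crossings; summing over the $n-2$ bundle pairs gives $k^2(n-2)$.

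For the matching lower bound on $\crn(G_{\ell,k})$, a random argument suffices. Given any drawing $D$ of $G_{\ell,k}$ with $c$ crossings, pick one representative independently and uniformly at random from each bundle and keep the special edge, yielding a random sub-drawing $D'$ of $G_\ell$. Writing $c_{BB}$ for the number of crossings of $D$ between distinct bundles and $c_{Bs}$ for those between a bundle and the special edge, linearity of expectation gives
\[
\mathbb{E}\bigl[\text{crossings in } D'\bigr] = \frac{c_{BB}}{k^2} + \frac{c_{Bs}}{k}.
\]
Since every drawing of $G_\ell$ has at least $\crn(G_\ell)=2$ crossings, this expectation is at least $2$, so $c_{BB}/k + c_{Bs}\ge 2k$; using $k\ge 1$ we obtain $c \ge c_{BB} + c_{Bs} \ge c_{BB}/k + c_{Bs}\ge 2k$. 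The ratio bound $\rr_\kp \ge k(n-2)/2$ then follows immediately.

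The hard part will be the formal extension of \cref{lem:1-planar-2-drawings} to the bundled setting, in particular ruling out drawings in which a dual vertex hides inside a bigon of some primal bundle. This should be handled by the binding bundles of $k$ parallel edges, which play exactly the role of the binding edges in the $1$-planar proof and pin each dual vertex of a polar face into the corresponding original face of $P$; the region argument above then transfers verbatim to yield the claimed $k^2(n-2)$ crossings.
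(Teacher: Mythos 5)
Your proposal follows essentially the same route as the paper, which justifies \cref{co:kplanar} only by the one-line remark that every edge of $G_\ell$ except the special edge is replaced by a bundle of $k$ parallel edges; your $k\times k$-grid counting for the upper bounds, the averaging argument for $\crn(G_{\ell,k})\ge 2k$, and the planned extension of \cref{lem:1-planar-2-drawings} to the bundled graph are exactly the details that this remark leaves implicit. The one caution is that your ``two outer sides of a bundle'' claim is only local (globally the $k$ parallel edges bound a single outer face, so a curve can route around an endpoint), so the full weight of the lower bound $\crn_\kp(G_{\ell,k})\ge k^2(n-2)$ indeed rests on the deferred bundled analogue of \cref{lem:1-planar-2-drawings} --- a step the paper itself also does not spell out.
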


\section{The quasi-planar \ratio}\label{sse:qp}

An $n$-vertex quasi-planar graph $G$ has at most $6.5n-20$ edges, thus $\crn_\qp(G) \in \bigo(n^2)$~\cite{DBLP:journals/csur/DidimoLM19}. 
For $\crn(G)  < \crn_\qp(G)$ it has to hold that $\crn(G) \ge 2$, and hence $\rr_\qp \in \bigo(n^2)$. 
We show that the quasi-planar \ratio is unbounded, even for~$\crn(G)\le 3$:

\begin{theorem}\label{thm:quasi-planar-linear}
For every $\ell \ge 2$, there exists a quasi-planar graph $G_\ell$ with $n=12\ell-5$ vertices such that $\crn_\qp(G_\ell) \ge \ell$ and $\crn(G_\ell) \le 3$, thus $\rr_\qp \in \Omega(n)$.
\end{theorem}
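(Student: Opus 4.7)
The plan is to construct $G_\ell$ as the union of (i) a rigid planar ``skeleton'' graph $S_\ell$ on roughly $12\ell$ vertices whose combinatorial embedding is essentially unique, and (ii) a small fixed set of ``long'' chord edges that, in a suitable (non-quasi-planar) drawing, form a three-fold mutual crossing and hence contribute only three crossings in total. More concretely, I would take $S_\ell$ to be a $3$-connected planar graph whose unique planar embedding has six distinguished \emph{portal} vertices $s_1,s_2,s_3,t_1,t_2,t_3$ placed on the outer face in this cyclic order, and then add the three chords $e_i = s_i t_i$, $i \in \{1,2,3\}$. In the natural drawing, $S_\ell$ is drawn planarly and the three chords are routed in the outer face so that they mutually cross at a single small region, giving exactly three crossings; this certifies $\crn(G_\ell) \le 3$.

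The bulk of the argument is the lower bound $\crn_\qp(G_\ell) \ge \ell$. The steps are:
\begin{enumerate}
\item Use rigidity of $S_\ell$ (a $3$-connected planar graph has a unique combinatorial embedding up to reflection) to fix the cyclic order of the portals along the outer boundary in any simple drawing of $G_\ell$.
\item Deduce that the three chords $e_1,e_2,e_3$ must be drawn pairwise crossing, unless at least one of them, say $e_j$, is routed ``through'' the interior of $S_\ell$ rather than through the region containing the other two portals.
\item Invoke the quasi-planarity assumption: three mutually crossing edges are forbidden, so at least one chord must indeed be rerouted through the interior of $S_\ell$.
\item Design $S_\ell$ to contain $\ell$ pairwise edge-disjoint ``blocker'' cycles/edges arranged so that every interior $s_j$-$t_j$ route inside $S_\ell$ has to cross each blocker at least once, yielding the desired $\ell$ crossings.
\end{enumerate}

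The hardest part will be step~4: arranging $S_\ell$ so that (a) its embedding is genuinely rigid, (b) the three chords can mutually cross in the natural drawing contributing exactly $3$ crossings, and (c) \emph{every} alternative routing of \emph{any} one of the chords through the interior provably meets $\ell$ edge-disjoint blockers, independently of which chord is rerouted. The counting also has to account for the symmetry between the three chords, so one typically argues via a separator/cut-based lower bound: each of the $\ell$ blockers separates the interior region into parts that force any $s_j$-$t_j$ curve to cross it. Balancing these constraints while keeping the vertex count at exactly $n = 12\ell - 5$ is the main design task, but once the gadget is in place the crossing lower bound is an immediate counting argument and the upper bound is by explicit construction.
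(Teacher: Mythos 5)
Your high-level strategy (interleaved chords that want to triple-cross, quasi-planarity forcing one of them to deviate, and a gadget that makes every deviation cost $\ell$ crossings) is in the same spirit as the paper's proof, but as written it has two genuine gaps. First, the rigidity step does not hold: Whitney's theorem fixes the embedding of a $3$-connected planar graph only in \emph{crossing-free} drawings, whereas a quasi-planar drawing of $G_\ell$ may draw the skeleton $S_\ell$ with arbitrarily many pairwise crossings (only three mutually crossing edges are forbidden). Then there is no well-defined outer face, no fixed cyclic order of the portals, and no meaningful ``interior of $S_\ell$'' through which a chord is rerouted, so steps 1--3 collapse. You need a mechanism that penalizes crossings among skeleton edges themselves; the paper attaches to \emph{every} skeleton edge a bundle of $\ell-1$ paths of length two, so that by \cref{lem:parallel} any crossing between two independent skeleton edges already yields $\ell$ crossings, and only then may one assume the skeleton is drawn without self-crossings.

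Second, your step 4 --- the actual heart of the theorem --- is left unconstructed, and the blocker plan as described is unlikely to work: the deviating chord has both endpoints on the outer boundary, so it can dip into the skeleton and back out crossing only two boundary edges, and an interior blocker cycle does not separate two outer-face vertices, so nothing forces the chord to meet $\ell$ blockers. Forcing $\ell$ crossings onto a \emph{single} edge is exactly the hard part. The paper avoids it altogether: it starts from the wheel $W_6$ (hexagon $C$ plus hub $x$) with every edge extended by $\ell-1$ parallel length-two paths and the three long diagonals of $C$ as special edges. If no special edge crosses the wheel, all three lie in the hexagonal face and mutually cross, contradicting quasi-planarity; otherwise some special edge $s$ crosses a wheel edge $(a,b)$, and the closed curve $\mathcal L$ formed by $s$ together with the half of $C$ avoiding $a$ and $b$ separates $a$ from $b$, so $(a,b)$ and its $\ell-1$ parallel paths each cross $\mathcal L$. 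The $\ell$ crossings are thus spread over the bundle attached to $(a,b)$ rather than loaded onto the rerouted chord --- this is the key idea missing from your sketch, and it also makes the vertex count $n=12\ell-5$ come out automatically rather than being a separate design task.
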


In order to prove \cref{thm:quasi-planar-linear}, we begin with a technical lemma.

\begin{lemma}\label{lem:parallel}
Let $G$ be a graph containing two independent edges $(u,v)$ and $(w,z)$. 
Suppose that $u$ and $v$ ($w$ and $z$, resp.) are connected by a set $\Pi_{uv}$ ($\Pi_{wz}$, resp.) of $\ell-1$ paths of length two.
Let $\Gamma$ be a drawing of $G$. 
If $(u,v)$ and $(w,z)$ cross in $\Gamma$, then $\Gamma$ contains at least $\ell$ crossings.
\end{lemma}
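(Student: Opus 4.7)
The plan is a parity/Jordan-curve argument in which we partition $\Pi_{uv}$ and $\Pi_{wz}$ according to whether their paths avoid crossing the opposite special edge, and then use the edges $(u,v)$ and $(w,z)$ together with single paths as separating closed curves.

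Let $A\subseteq\Pi_{uv}$ be the set of paths no edge of which crosses $(w,z)$ in $\Gamma$, and $B\subseteq\Pi_{wz}$ the set of paths no edge of which crosses $(u,v)$; set $a=|A|$ and $b=|B|$. By definition, each path in $\Pi_{uv}\setminus A$ contributes at least one crossing with $(w,z)$, and each path in $\Pi_{wz}\setminus B$ contributes at least one crossing with $(u,v)$. Together with the single $(u,v)$--$(w,z)$ crossing assumed by hypothesis, and using that the paths within $\Pi_{uv}$ (resp.\ $\Pi_{wz}$) are pairwise edge-disjoint and that $(u,v),(w,z)$ are edges of no such path, these yield $1+(\ell-1-a)+(\ell-1-b)$ pairwise distinct crossings.

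The key step produces additional crossings between paths of $A$ and paths of $B$. For any $\pi\in A$, the cycle $C_\pi$ consisting of $(u,v)$ and $\pi$ is a simple closed curve. Since $\pi\in A$, the edge $(w,z)$ meets $C_\pi$ only at its unique crossing with $(u,v)$; by the Jordan curve theorem, $w$ and $z$ lie in different regions of $C_\pi$. Now take any $\pi'\in B$: as a curve from $w$ to $z$, it meets $C_\pi$ an odd number of times, and because $\pi'\in B$ forbids any crossing of $\pi'$ with $(u,v)$, at least one of these meetings is a crossing between an edge of $\pi'$ and an edge of $\pi$. Thus each pair $(\pi,\pi')\in A\times B$ contributes at least one further crossing; since distinct pairs involve distinct edges, we obtain $ab$ such crossings, all pairwise distinct and disjoint from those counted above.

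Summing, $\Gamma$ has at least $2\ell-1-a-b+ab$ crossings, and it suffices to verify $ab-a-b+\ell-1\ge 0$ for $a,b\in\{0,1,\dots,\ell-1\}$. This is immediate: if $a\ge 1$ and $b\ge 1$ then $(a-1)(b-1)\ge 0$ gives the claim for $\ell\ge 2$; while if, say, $a=0$, the expression reduces to $\ell-1-b\ge 0$. The main obstacle is really just the Jordan-curve bookkeeping and, in particular, making sure that the crossings counted in the three regimes (with $(u,v)$, with $(w,z)$, and among pairs of $A\times B$) are indeed all distinct, which follows directly from the pairwise edge-disjointness of the paths in $\Pi_{uv}$ and in $\Pi_{wz}$.
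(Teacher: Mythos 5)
Your proof is correct, and its geometric heart is the same as the paper's: a length-two path that avoids the opposite special edge forms, together with its own special edge, a closed curve (a triangle, simple because adjacent edges do not cross in a simple drawing) that separates the endpoints of the other special edge, so every path of the other bundle must cross it. The difference is in the bookkeeping. The paper needs this observation only once and finishes with a two-case dichotomy: either all $\ell-1$ paths of $\Pi_{wz}$ cross $(u,v)$ (done, together with the $(u,v)$--$(w,z)$ crossing), or some path of $\Pi_{wz}$ avoids $(u,v)$, and then the triangle $t$ it forms with $(w,z)$ separates $u$ from $v$, forcing all $\ell-1$ paths of $\Pi_{uv}$ to cross $t$. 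You instead symmetrize, partition both bundles into crossing and non-crossing paths ($A$, $B$), extract one path--path crossing for every pair in $A\times B$, and close with the inequality $ab-a-b+\ell-1\ge 0$; this is heavier machinery than necessary, though it does deliver the slightly stronger count $2\ell-1-a-b+ab$ rather than just $\ell$. One point to make explicit (the paper leaves it implicit as well, and it holds in the construction where the lemma is applied): the paths in $\Pi_{uv}\cup\Pi_{wz}$ are internally disjoint from each other and avoid $u,v,w,z$ as interior vertices, so the intersections you count via the Jordan-curve parity are genuine edge crossings rather than meetings at shared vertices, and $(u,v),(w,z)$ are not edges of any path.
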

\begin{proof}
Suppose that $(u,v)$ and $(w,z)$ cross. 
If each of the $\ell-1$ paths in $\Pi_{wz}$ crosses $(u,v)$, then the claim follows. 
Assume otherwise that at least one of these paths does not cross $(u,v)$. 
This path forms a $3$-cycle $t$ with $(w,z)$; the $\ell-1$ paths of $\Pi_{uv}$ all cross at least one edge of $t$, which proves the claim.
\end{proof}

\begin{figure}[t]
\centering
\subcaptionbox{$\crn(G_\ell) \le 3$\label{fi:quasiplanar-1}}{\includegraphics[page=1]{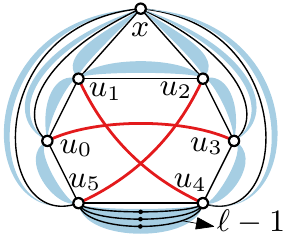}}
\hfill
\subcaptionbox{$\crn_\qp(G_\ell) {\le} 2\ell{+}1$\label{fi:quasiplanar-2}}{\includegraphics[page=2]{quasiplanar-linear}~}
\hfill
\subcaptionbox{$C$ is not crossed\label{fi:quasiplanar-3}}{\includegraphics[page=3]{quasiplanar-linear}}
\hfill
\subcaptionbox{$C$ is crossed\label{fi:quasiplanar-4}}{\includegraphics[page=4]{quasiplanar-linear}}
\caption{Illustration for the proof of \cref{thm:quasi-planar-linear}.}
\end{figure}

\begin{proof}[of \cref{thm:quasi-planar-linear}]
Let $G_\ell$ be the graph constructed as follows; cf.\ \cref{fi:quasiplanar-1}. 
Start with a 6-cycle $C=\langle u_0,u_1,\dots,u_5 \rangle$, and a vertex $x$ connected to each of $C$, yielding graph $G'$.
\emph{Extend} each edge of $G'$ by adding $\ell-1$ disjoint paths of length two between its endpoints.
Finally, add \emph{special} edges $(u_i,u_{i+3})$, $i=0,1,2$.

The resulting graph $G_\ell$ has $n=12(\ell-1)+7=12\ell-5$ vertices and admits a drawing with~3 crossings, so $\crn(G_\ell) \le 3$; see \cref{fi:quasiplanar-1}. 
Note that $G_\ell$ admits a quasi-planar drawing with $2\ell+1$ crossings as shown in \cref{fi:quasiplanar-2}.
We prove that $\crn_\qp(G_\ell) \ge \ell$.  Let $\Gamma$ be a quasi-planar drawing of $G_\ell$. 
If there are two edges of~$G'$ that cross each other, then the claim follows by \cref{lem:parallel}. %

If no special edge would cross $G'$, they would all
be drawn within the unique face of size 6 in $G'$.
They would mutually cross, contradicting quasi-planarity.

Thus, at least one special edge, say $s=(u_0,u_3)$, crosses an edge $(a,b)$ 
of~$G'$.
Consider the closed (possibly self-intersecting) curve $\mathcal L$ 
composed of $s$ plus the subpath of 
$C$ connecting $u_0$ to $u_3$ and containing none of the vertices $a$ and~$b$. 
This curve partitions the plane into two or more regions, and $a$ 
and $b$ lie in different regions; see~\cref{fi:quasiplanar-3}--\cref{fi:quasiplanar-4} for an illustration.  
Thus $(a,b)$ and the $\ell-1$ paths connecting $a$ and $b$  cross $\mathcal L$, yielding $\ell$ crossings in $\Gamma$, as desired. 
\end{proof}

The above proof can be straight-forwardly extended to $k$-quasi-planar graphs by using exactly the same construction in which the cycle $C$ has length $2k$.
Note that any $k$-quasi-planar graph has at most $c_k n\log n$ edges, where~$c_k$ depends only on~$k$~\cite{DBLP:journals/comgeo/SukW15},
so $\rr_\qp\le f(k)\cdot n^2\log^2 n$.

\begin{corollary}
  \label{co:k-quasi-planar}
  For every $\ell \ge 2$ and $k \ge 3$, there exists a $k$-quasi-planar 
  graph~$G_{\ell,k}$ with $n=2k(\ell+1)+1$ vertices such 
  that $\crn_\qp(G_{\ell,k}) \ge \ell$ 
  and $\crn(G_{\ell,k}) \le k(k-1)/2$, 
  thus $\rr_\qp \in \Omega(n/k^3)$.
\end{corollary}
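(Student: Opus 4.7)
The plan is to replay the argument of \cref{thm:quasi-planar-linear} with the $6$-cycle $C$ replaced by a $2k$-cycle and the three diameter special edges replaced by the $k$ diameters $(u_i,u_{i+k})$, $i=0,\dots,k-1$. Concretely, I take $G'$ to be the wheel with rim $C=\langle u_0,\dots,u_{2k-1}\rangle$ and hub $x$, extend every edge of $G'$ by $\ell-1$ parallel length-two paths, and then add the $k$ special edges. A planar drawing of the wheel with each bundle of extensions routed next to its original edge places all $k$ special edges into the outer $2k$-gonal face, where they appear as antipodal chords and therefore pairwise cross, giving exactly $\binom{k}{2}=k(k-1)/2$ crossings; hence $\crn(G_{\ell,k})\le k(k-1)/2$.

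For the $k$-quasi-planar lower bound I would fix a $k$-quasi-planar drawing $\Gamma$ of $G_{\ell,k}$ and split into the same two cases as before. If two edges of $G'$ cross in $\Gamma$, then \cref{lem:parallel} applies unchanged because every edge of $G'$ still carries $\ell-1$ parallel paths, and we obtain $\ell$ crossings. Otherwise $G'$ is drawn planarly, and its only face containing a pair of antipodal cycle vertices is the $2k$-gonal face. If no special edge crossed $G'$, the $k$ diameters would all live inside that single face; but inside a $2k$-gon two antipodal chords $(u_i,u_{i+k})$ and $(u_j,u_{j+k})$ alternate along the boundary and must cross by a Jordan-curve argument, so the $k$ diameters would be pairwise crossing, contradicting $k$-quasi-planarity. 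Hence some special edge $s=(u_i,u_{i+k})$ crosses an edge $(a,b)$ of $G'$. Since adjacent edges do not cross in a simple drawing, $\{a,b\}\cap\{u_i,u_{i+k}\}=\emptyset$, and at least one of the two $u_i$--$u_{i+k}$ subpaths of $C$ avoids $\{a,b\}$; concatenating it with $s$ yields a closed curve $\mathcal{L}$. Because $G'$ is drawn planarly, $(a,b)$ meets $\mathcal{L}$ only at its single crossing with $s$, putting $a$ and $b$ into different regions, and the $\ell$ internally disjoint $a$--$b$ arcs (the edge $(a,b)$ plus its $\ell-1$ parallel paths) each must therefore cross $\mathcal{L}$, yielding $\ell$ distinct crossings in $\Gamma$.

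The ratio then follows from $\rr_\qp\ge\crn_\qp(G_{\ell,k})/\crn(G_{\ell,k})\ge 2\ell/(k(k-1))$, and since $n=\Theta(k\ell)$ this gives $\Omega(n/k^3)$ as claimed. The step that needs the most care is the subcase where no special edge crosses $G'$: one must make the Jordan-curve argument explicit to show that the $k$ antipodal chords of the $2k$-gon are forced to pairwise cross no matter how they are routed within that face, which is precisely the structural property that lets the construction scale cleanly with~$k$.
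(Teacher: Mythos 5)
Your construction and case analysis coincide with the paper's: the authors prove \cref{co:k-quasi-planar} by the single remark that the proof of \cref{thm:quasi-planar-linear} extends verbatim once the $6$-cycle is replaced by a $2k$-cycle, and your two cases (a crossing between two edges of~$G'$, handled by \cref{lem:parallel}; otherwise the $k$ antipodal chords confined to the unique $2k$-gonal face would be $k$ mutually crossing edges, so some special edge crosses an edge $(a,b)$ of~$G'$ and the closed-curve parity argument forces the $\ell$ internally disjoint $a$--$b$ arcs to cross $\mathcal{L}$) are exactly that extension, worked out correctly. Two small points deserve attention. First, the drawing you use to show $\crn(G_{\ell,k})\le k(k-1)/2$ contains $k$ pairwise crossing edges and hence does \emph{not} certify that $G_{\ell,k}$ is $k$-quasi-planar; as the paper does for $k=3$ via \cref{fi:quasiplanar-2}, you should additionally exhibit one $k$-quasi-planar drawing, e.g.\ reroute a single diameter through the interior of the wheel so that it crosses only spokes and their bundles (which are pairwise non-crossing), leaving at most $k-1$ mutually crossing edges. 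Second, your graph has $4k\ell-2k+1$ vertices rather than the $2k(\ell+1)+1$ stated in the corollary (the paper's ``exactly the same construction'' remark exhibits the same mismatch); since both counts are $\Theta(k\ell)$, the conclusion $\rr_\qp\in\Omega(n/k^3)$ is unaffected, but you should either adjust the construction or restate the vertex count if you want to match the corollary literally.
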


\section{The fan-planar \ratio}\label{sse:fp}

An $n$-vertex fan-planar graph $G$ has at most $5n-10$ edges, thus $\crn_\fp(G) \in \bigo(n^2)$~\cite{DBLP:journals/csur/DidimoLM19}. For $\crn(G)  < \crn_\fp(G)$ it has to hold that $\crn(G) \ge 2$, and hence $\rr_\fp \in \bigo(n^2)$. We show that the fan-planar \ratio is unbounded, even for $\crn(G)=3$. 

\begin{theorem}\label{thm:fan-planar-linear}
For every $\ell \ge 2$, there exists a fan-planar graph $G_\ell$ with $n=9\ell+1$ vertices such that $\crn_\fp(G_\ell) = \ell$ and $\crn(G_\ell)=3$, thus $\rr_\fp \in \Omega(n)$.
\end{theorem}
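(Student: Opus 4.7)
The plan is to mirror the structure of the proof of \cref{thm:quasi-planar-linear}. I would construct $G_\ell$ from a small base graph $G'$ equipped with a few carefully chosen \emph{special} edges, then replace each non-special edge of $G'$ by a \emph{bundle} consisting of the original edge together with $\ell-1$ additional internally disjoint paths of length two between its endpoints. For the extended graph to have $n = 9\ell+1$ vertices, $G'$ should contain exactly $10$ vertices and $9$ extended edges (plus a constant number of unextended special edges, since $V'+E'(\ell-1)=9\ell+1$ forces $V'=10$ and $E'=9$). The structure of $G'$ must be rich enough that, in any fan-planar drawing, some bundle is topologically forced to contribute $\ell$ crossings, yet flexible enough to admit an unrestricted drawing with only $3$ crossings. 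In analogy with the quasi-planar case, a natural candidate is a short cycle with a central vertex, a few radial edges, and a small number of chord-like special edges, tuned so that the number of extended edges is $9$.

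For the upper bounds I would first exhibit a drawing of $G_\ell$ with exactly $3$ crossings by fixing a planar drawing of the non-extended part of $G'$, routing the special edges so as to incur exactly three crossings in total, and finally routing the $\ell-1$ parallel paths of every bundle in a thin strip alongside their corresponding edge so that no new crossings are introduced. For the matching fan-planar upper bound $\crn_\fp(G_\ell)\le\ell$, I would draw one special edge $s$ so that it is crossed by exactly $\ell$ edges, all sharing a common endpoint incident to $s$; this realises a legal fan at $s$, while every other edge is drawn planarly.

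For the lower bound $\crn_\fp(G_\ell)\ge\ell$ I would argue by contradiction, assuming a fan-planar drawing $\Gamma$ with fewer than $\ell$ crossings. By \cref{lem:parallel} applied to every pair of bundle edges, no two edges of $G'$ can cross in $\Gamma$, hence $G'$ inherits a planar embedding in $\Gamma$. The special edges must then be inserted into this embedding, and by design no single face of $G'$ can host them all, so at least one special edge $s$ must cross some bundle edge $(a,b)$. I would then combine the closed-curve argument from the quasi-planar proof, which isolates a subpath of the cycle such that $a$ and $b$ lie on opposite sides of a closed curve formed by $s$ and this subpath and thereby forces $(a,b)$ together with its $\ell-1$ parallel paths to cross the curve, with the fan-planarity constraint that restricts the edges crossing $s$ to form a fan on one common endpoint and to cross $s$ from the same side.

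The main obstacle will be this last step: fan-planarity is strictly more permissive than $1$-planarity but strictly more restrictive than quasi-planarity, and a careful topological case analysis is required to verify that the special edges of $G'$ cannot all be realised as legal fans in the planar embedding of the bundles without inducing the $\ell$ forced crossings. In particular, the two forbidden configurations (two independent edges crossing a third, and two adjacent edges crossing a fourth from different sides) each need to be excluded by hand, and the exact choice of $G'$ and its special edges must make this exclusion tight. I expect the final argument to hinge on a single symmetric closed-curve partition that simultaneously rules out both forbidden patterns and exposes $\ell$ unavoidable crossings.
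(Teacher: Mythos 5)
Your proposal is a programme rather than a proof, and the part you defer --- ``a careful topological case analysis is required to verify that the special edges of $G'$ cannot all be realised as legal fans \dots I expect the final argument to hinge on a single symmetric closed-curve partition'' --- is exactly the content of the theorem's lower bound, which you never actually supply. You also never fix the base graph $G'$: the wheel-with-chords candidate is only guessed from the vertex count, and the numerological constraint you derive ($V'=10$, $E'=9$ extended edges) is not forced at all; the paper's construction extends only $7$ of the $9$ edges of a $K_{3,3}$ and additionally attaches two vertices $w',z'$ via edges $\bar w=(w,w')$, $\bar z=(z,z')$ together with $\ell$ (not $\ell-1$) length-two paths from $w'$ to $z$ and from $z'$ to $w$, giving $6+7(\ell-1)+2+2\ell=9\ell+1$. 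Without a concrete $G'$ one cannot check that the special edges cannot be accommodated fan-planarly, nor that a fan-planar drawing with only $\ell$ crossings exists (your description of that drawing, with the crossing edges ``sharing a common endpoint incident to $s$,'' is also off: in a simple drawing adjacent edges do not cross; the fan's apex is a vertex not on $s$). Finally, the statement claims $\crn(G_\ell)=3$, and you only plan the upper bound; the matching lower bound needs an argument (in the paper: three $K_{3,3}$ subdivisions sharing only the edge $(u,v)$, each requiring its own crossing).

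The paper's lower bound is structurally different from your closed-curve plan and notably cleaner. After using \cref{lem:parallel} to assume no two extended edges cross (the one step you share), it does not pass through a planar embedding of $G'$ plus a separating curve at all. Instead it observes that $G_\ell$ contains $\ell$ subdivisions of $K_{3,3}$ pairwise sharing only the edges $(u,v)$ and $\bar w$ (one through each of the $\ell$ paths from $w'$ to $z$), so either these give $\ell$ distinct crossings or $(u,v)$ crosses $\bar w$; symmetrically either $\ell$ crossings arise or $(u,v)$ crosses $\bar z$. Since $\bar w$ and $\bar z$ are independent, fan-planarity forbids both crossings on $(u,v)$, and the bound follows with no case analysis of fan configurations. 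If you want to salvage your route, you would have to carry out the deferred analysis showing that your unspecified special edges cannot all be drawn as legal fans without $\ell$ crossings --- and it is precisely to avoid that delicate analysis that the paper switches from the quasi-planar-style wheel construction to the $K_{3,3}$-subdivision counting argument.
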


\begin{figure}[t]
\centering
\subcaptionbox{$\crn(G_\ell) \le 2$.\label{fi:fanplanar-1}}{\includegraphics[page=1]{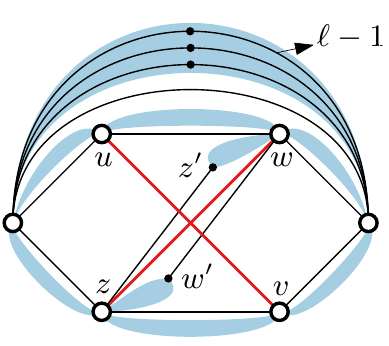}}
\hfil
\subcaptionbox{$\crn_\fp(G_\ell) \le \ell$.\label{fi:fanplanar-2}}{\includegraphics[page=2]{fanplanar-linear}}
\caption{Illustration for the proof of \cref{thm:fan-planar-linear}.}
\end{figure}

\begin{proof}
Let $G_\ell$ be the graph constructed as follows; cf.~\cref{fi:fanplanar-1}. 
Start with a $K_{3,3}$. 
\emph{Extend} each edge of the $K_{3,3}$
by adding $\ell-1$ disjoint paths of length two between its endpoints, except for two 
independent edges $(u,v)$ and $(w,z)$. 
Add vertices $w'$ and $z'$, edges $\bar w=(w,w')$ and $\bar z=(z,z')$, $\ell$ disjoint paths of length two 
connecting $w'$ and $z$, and $\ell$ disjoint paths of length two connecting $z'$ and $w$.

Graph $G_\ell$ has $n=6+7(\ell-1)+2+2\ell=9\ell+1$ vertices and admits a drawing with three crossings, see \cref{fi:fanplanar-1}. Recall that we obtain a subdivision of a graph $G$ by subdividing (even multiple times) any subset of its edges. $G_\ell$ contains three subdivisions of $K_{3,3}$ sharing only edge $(u,v)$, and thus each subdivision requires at least one distinct crossing in any drawing. It follows that $\crn(G_\ell)=3$. 
Note that $G_\ell$ admits 
a fan-planar drawing with $\ell$ crossings, cf.\ \cref{fi:fanplanar-2}. 
We prove that $\crn_\fp(G_\ell) = \ell$.
Let $\Gamma$ be a fan-planar drawing of $G_\ell$. If any two extended edges 
cross each other, then the claim follows by \cref{lem:parallel}. Assume they do not:

$G_\ell$ contains $\ell$ subdivions of $K_{3,3}$ that share only $(u,v)$ 
and $\bar w$. Since each $K_{3,3}$ subdivision requires at least one crossing,
there are either $\ell$ crossings in $\Gamma$ (proving the claim), or $(u,v)$ crosses $\bar w$.
Similarly, $G_\ell$ contains~$\ell$ $K_{3,3}$ subdivisions that 
share only $(u,v)$ and $\bar z$, and we can assume that $(u,v)$ crosses $\bar z$. 
But fan-planarity forbids $(u,v)$ to cross both $\bar w$ and $\bar z$.
\end{proof}

\section{Open problems}\label{se:open}

The main open question is whether there exist fan-planar and quasi-planar graphs whose crossing ratio is $\Omega(n^2)$.
In fact, we conjecture that this bound can be reached, but proving our suspected constructions turns out to be elusive.
Another natural research direction is to extend our results to further families of beyond-planar graphs, such as $k$-gap planar graphs or fan-crossing-free graphs (refer to~\cite{DBLP:journals/csur/DidimoLM19} for definitions). 
Finally, we may ask whether similar lower bounds can be proved in the geometric setting (i.e., when the edges are drawn as straight-line segments).

\bibliographystyle{splncs04}
\bibliography{abbrv,paper}

\ifproc
\end{document}
\fi

\clearpage
\appendix

\section{Omitted proofs of \cref{sse:1pl}}\label{app:1pl}

\begin{backInTime}{1-planar-2-drawings}
\begin{lemma}
  \lemOPlanarTDrawings
\end{lemma}

\begin{proof}
Let~\bind{P} be the subgraph of~$G'$ that consists of~$P,f^*$, and the binding edges. 
In any 1-planar drawing of \bind{P},~$f^*$ has to lie in face~$f$ of~$P$:
if we place it in face~$g$, then all binding edges have more than one crossing. 
If we place it in any other face~$h$, then there are at least~$\ell-2$ binding edges that have to leave~$h$, but there are at most~$4<\ell-2$ edges on its boundary, so one of them has to be crossed more than once. Hence, \bind{P} has a unique 1-planar embedding on the sphere.

We now argue that there are only two ways to place the vertices of~$P^*-\{f^*\}$ into the faces of $P$ to obtain a 1-planar embedding of~$G'$.
To this end, observe that there are~$\ell$ disjoint paths (i.e., not sharing any interior vertex) between~$f^*$ and~$g^*$ in~$P^*$. 
We already know that~$f^*$ has to lie in face~$f$.
If we place~$g^*$ in any face~$h$ of~$P$ that has fewer than~$\ell$ edges on its boundary, then each of these disjoint paths has to leave~$h$, so one of the edge on the boundary of~$h$ is crossed more than once.
This leaves only two possibilities for the placement of~$g^*$: either in~$f$ (as depicted in \cref{fi:1-planar-2-crossings}), or in the other polar face~$g$ (as depicted in \cref{fi:1-planar-1-planar}).

We first assume that~$g^*$ lies in~$g$. 
Then, any path from~$g^*$ to~$f^*$ must have at least~7 crossings.
Further, every edge of~$P^*$ lies on a path of length~7 from~$g^*$ to~$f^*$, so every edge of~$P^*$ has to be crossed. 
Since the number of edges in~$P$ and~$P^*$ are equal, also every edge of~$P$ has to be crossed, so there is exactly one vertex of~$P^*$ in every face of~$P$. 
Since~$g^*$ has~$\ell$ neighbors and~$g$ has~$\ell$ edges on its boundary, all its neighbors have to lie in their corresponding face (up to rotation), and by following the disjoint paths, all other vertices of~$P^*$ also have to do so.

Assume now that~$g^*$ lies in~$f$. 
Then we argue that there cannot be a crossing between any edge of~$P$ and an edge of~$P^*$. 
Assume to the contrary that there is one. 
Since~$f^*$ and~$g^*$ lie in face~$f$, there has to be a crossing involving an edge on the boundary of~$f$. 
Let~$(u,v)$ be edge of~$P^*$ involved in this crossing such that~$u$ lies in~$f$ and~$v$ does not.
Then~$v$ lies in a face~$h$ of size~3. 
It is easy to see that there are~$k$ disjoint paths between~$f^*$ and any vertex of~$P^*$ of degree~$k$.
Hence, with the same argument as above,~$v$ must have degree~3.
Assume w.l.o.g. that~$v$ is closer to~$f^*$ than to~$g^*$ in~$P^*$ (the other case is symmetric). We have to distinguish three cases.

\begin{figure}[t]
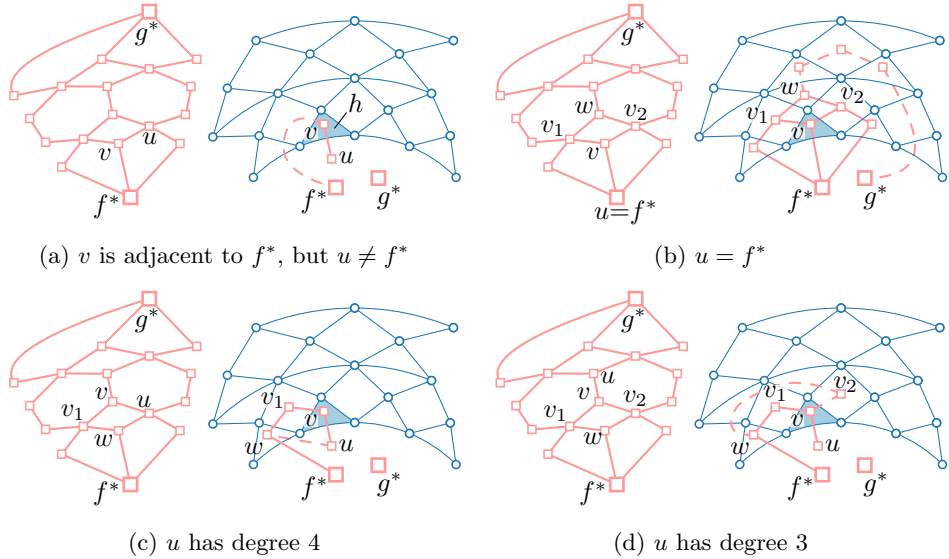

\begin{subfigure}[t]{.47\columnwidth}
  \includegraphics[page=10]{1planar}
  \caption{$v$ is adjacent to~$f^*$, but $u\neq f^*$}
  \label{fi:1-planar-u-notf}
\end{subfigure}
\hfill
\begin{subfigure}[t]{.47\columnwidth}
  \includegraphics[page=13]{1planar}
  \caption{$u=f^*$}
  \label{fi:1-planar-u-f}
\end{subfigure}

\medskip

\begin{subfigure}[t]{.47\columnwidth}
  \includegraphics[page=16]{1planar}
  \caption{$u$ has degree~4}
  \label{fi:1-planar-u-4}
\end{subfigure}
\hfill
\begin{subfigure}[t]{.47\columnwidth}
  \includegraphics[page=19]{1planar}
  \caption{$u$ has degree~3}
  \label{fi:1-planar-u-3}
\end{subfigure}
\caption{Proof that~$P$ and~$P^*$ do not cross when~$g^*$ lies in~$f$.
(Left) situation in~$P^*$ and (Right) proof that the drawing cannot be 1-planar.}
\end{figure}

First, assume that~$v$ is adjacent to~$f^*$. 
If $u\neq f^*$, then the edge $(f^*,v)$ cannot be drawn with one crossing, as the edge between~$f$ and~$h$ is already crossed; see \cref{fi:1-planar-u-notf}.
So $u=f^*$.
Let~$v_1,v_2$ be the other two neighbors of~$v$; see \cref{fi:1-planar-u-f}. 
There are two disjoint paths of length~3 %
in~$P^*$ that go through~$v_1$ and~$v_2$, respectively, and there is exactly one way to draw these two paths in a 1-planar way without crossing the edge between~$f$ and~$h$.
There is a common neighbor~$w$ of~$v_1$ and~$v_2$, so~$w$ has to be placed in a face adjacent to the faces of~$v_1$ and~$v_2$.
However, now there is no way to draw the path from~$w$ to~$g^*$ that consists of~4 edges without multiple crossings.

Second, assume that~$v$ is not adjacent to~$f^*$ and~$u$ has degree~4; see \cref{fi:1-planar-u-4}. 
Let~$v_1$ be the other neighbor of~$v$ of degree~4, and let~$w$ be the common neighbor of~$v_1$ and~$u$. 
The path $(v,v_1,w,f^*)$ has length~3, so it has to be drawn as in \cref{fi:1-planar-u-4} (up to symmetry). 
However, then there is no way to add the edge $(u,w)$ in a 1-planar way.

Finally, assume that~$v$ is not adjacent to~$f^*$ and~$u$ has degree~3; see \cref{fi:1-planar-u-3}. 
Let~$v_1,v_2$ be the other two neighbors of~$v$ and let~$w$ be their common neighbor.
The path $(v,v_1,w,f^*)$ has length~3, so it has to be drawn exactly as in the previous case.
However, then there is no way to draw~$v_2$ such that both~$(v,v_2)$ and~$(w,v_2)$ are crossed only once.

Thus, if~$g^*$ lies in~$f$, then there cannot be any crossing between~$P$ and~$P^*$, so~$P^*$ completely lies inside~$f$.
\end{proof}
\end{backInTime}

\section{A remark to Section~\ref{sse:qp}}

We remark that there is an alternative proof of Theorem~\ref{thm:quasi-planar-linear} and of Corollary~\ref{co:k-quasi-planar} based on the fact that any ($k$)-quasi-planar drawing of $G$ can be redrawn in such a way that (i) the number of crossings is not increased, (ii) special edges are not redrawn, and (iii) each bundle of $\ell$ ``parallel'' paths (an extended edge and the $\ell-1$ paths extending it) has all its $\ell$ paths drawn along almost the same trajectory, thus in particular all the paths in each bundle cross the same set of edges. Such a redrawing is obtained by redrawing, one by one for each bundle, the paths in a bundle by paths drawn along one of them, which has the smallest number of crossings with the edges outside of the bundle. 

\end{document}